\documentclass[11pt,a4paper]{article}
\usepackage{amsfonts,amssymb,amsmath,amsthm,cite}
\usepackage{graphicx}
\usepackage{slashed}

\textheight=22.5cm
\textwidth=16.cm
\oddsidemargin=0cm
\evensidemargin=\oddsidemargin
\topmargin=0.6cm
\topskip=0cm
\headheight=0.6cm
\headsep=0.2cm

\newcommand{\MM}{\mathcal{M}}
\newcommand{\ii}{\mathrm{i}}
\newcommand{\np}{{\mathbf{n}^+}}
\newcommand{\nm}{{\mathbf{n}^-}}
\newcommand{\nn}{\mathbf{n}}
\newcommand{\DD}{\mathcal{D}}
\newcommand{\nchi}{\eta} 

\newtheorem{assumption}{Assumption}[section]
\newtheorem{theorem}[assumption]{Theorem}

\newtheorem{lemma}[assumption]{Lemma}

\newtheorem{remark}[assumption]{Remark}

\begin{document}

\begin{center}

{\Large\textbf{Atiyah-Patodi-Singer Index Theorem for Domain Walls}}
\vspace{0.5cm}

{\large A.~V.~Ivanov$^\dag$ and D.~V.~Vassilevich$^\ddag$}

\vspace{0.5cm}

$^\dag${\it St. Petersburg Department of Steklov Mathematical Institute of
Russian Academy of Sciences,}\\{\it 27 Fontanka, St. Petersburg 191023, Russia}\\
{\it E-mail: regul1@mail.ru}\\
$^\ddag${\it Center of Mathematics, Computation and Cognition, Universidade Federal do ABC}\\
{\it 09210-580, Santo Andr\'e, SP, Brazil}\\
$^\ddag${\it Physics Department, Tomsk State University, Tomsk, Russia}\\
{\it Email: dvassil@gmail.com}

\end{center}

\begin{abstract}
We consider the index of a Dirac operator on a compact even dimensional manifold with a domain wall. The latter is defined as a co-dimension one submanifold where the connection jumps. We formulate and prove an analog of the Atiyah-Patodi-Singer theorem that relates the index to the bulk integral of Pontryagin density and $\eta$-invariants of auxiliary Dirac operators on the domain wall. Thus the index is expressed through the global chiral anomaly in the volume and the parity anomaly on the wall.
\end{abstract}

\section{Introduction}\label{sec:intro}
The Atiyah-Patodi-Singer (APS) index theorem \cite{Atiyah:1975jf} relates the index of a Dirac operator on a manifold with boundary to the integral of the Pontryagin density over the interior of the manifold and to the $\eta$-invariant of an auxiliary Dirac operator on the boundary. This relation is quite remarkable from the point of view of theoretical physics since the Pontryagin density is the local Adler-Bell-Jackiw \cite{Adler:1969gk,Bell:1969ts} axial anomaly while the $\eta$-invariant can be shown to define the parity anomaly \cite{Niemi:1983rq,Redlich:1983dv,AlvarezGaume:1984nf}.

The relations between bulk and boundary anomalies are being intensively studied in the context of the "anomaly inflow" mechanism \cite{Witten:2019bou}. The APS index theorem was used in quantum field theory in \cite{Witten:2015aba} to discuss topological phases in the fermion path integral. 

Although the APS index theorem was incorporated into the quantum field theory context shortly after publication of original APS paper, see \cite{Hortacsu:1980kv}, the application of this theorem were limited by the non-local nature of APS boundary conditions. In more recent work the use of non-local boundary conditions became unnecessary. The paper \cite{Witten:2019bou} analysed the APS theorem with local boundary conditions. The boundary contributions to parity anomaly have been computed in \cite{Kurkov:2017cdz,Kurkov:2018pjw} also for local (bag) boundary conditions.

Another recent and rather fruitful idea was to extend the APS theorem to domain wall type configurations. The paper \cite{Fukaya:2019qlf} defined domain walls as the surfaces where the potential term in Dirac operator is discontinuous. Here we follow the approach of \cite{Vassilevich:2018aqu} where the domain walls are defined as the submanifolds where the connection jumps. This is exactly what happens on domain walls in a ferromagnet. To prove the APS theorem, the paper \cite{Vassilevich:2018aqu} used explicit heat kernel computations of all anomalies in 4D with some restrictions on the geometry near domain walls. The present work extends the results of \cite{Vassilevich:2018aqu} in a very essential way. We lift all restrictions on the dimension and many restrictions on the geometry. Moreover, we use more sophisticated methods to prove the theorem without explicitly computing all terms involved in the equation.

Most of our notations and conventions are borrowed from \cite{Fursaev:2011zz}. Our paper uses a great variety of methods and approaches that are explained in the monographs \cite{BGV2004,BB2013,Gilkey:1984,Kirsten,Nakahara:2003nw}.

This paper is organized as follows. In the next section we introduce main notations and definitions and formulate the principal theorem. This theorem is demonstrated in section \ref{sec:prod} under an additional assumption of product structure near the domain wall. This assumption is lifted in section \ref{sec:lift}. Possible extensions of our main theorem are discussed in section \ref{sec:disc}.

\section{Definitions and statement of the result}\label{sec:def}
Let $\MM$ be a compact orientable Riemannian manifold without boundary of $\dim \MM=n=2m$ and $\Sigma$ be a smooth co-dimension one closed submanifold in $\MM$. Local coordinates on $\MM$ will be denoted by $x^\mu, x^\nu$, etc. Let $x^a$, $a=1,\dots,n-1$, denote coordinates on $\Sigma$. Let us label in an arbitrary way two sides of $\Sigma$ by $+$ and $-$ and denote by $\np$ and $\nm$ two unit normal vectors, see Fig.1. We assume that the metric is smooth across the boundary, so that $\np=-\nm\equiv\nn$. This coordinate system can be extended from $\Sigma$ into $\MM$ as Gaussian coordinates. The coordinate $x^n$ then denotes the geodesic distance to $\Sigma$.  Let $V$ be a hermitian vector bundle over $\MM$. 

The spectral problem which will be considered in this paper is defined by a Dirac operator\footnote{Note that the existence of a Dirac operator does not necessarily require any spin structure. An example is the Hodge-Dirac operator $d+\delta$ defined on differential forms.} in the bulk jointly with matching conditions at the interface surface $\Sigma$.
On $\MM \backslash \Sigma$ the Dirac operator is given by the local formula
\begin{equation}
\slashed{D}=\ii \gamma^\mu \nabla_\mu \,,\label{Dirop}
\end{equation}
where the $\gamma$-matrices act fibre-wise on $V$, are hermitian, and satisfy the Clifford relation
$\gamma^\mu\gamma^\nu +\gamma^\nu\gamma^\mu =2g^{\mu\nu}I_V$.
The connection has to be compatible with the Clifford structure, $\nabla_\mu \gamma^\nu = \gamma^\nu \nabla_\mu$. Let us introduce a chirality matrix
\begin{equation}
\gamma_* = -\frac{\ii^m}{n!} \epsilon_{\mu\dots\rho}\gamma^\mu \dots \gamma^\rho \label{gstar}
\end{equation}
with $\epsilon_{\mu\dots\rho}$ being the Levi-Civita tensor on $\MM$. Obviously, $\gamma_*\gamma^\mu = -\gamma^\mu \gamma_*$, $\gamma_*^2=I_V$, $\gamma_*$ is hermitian, $\gamma_*^\dag=\gamma_*$, and
\begin{equation}
\slashed{D}\gamma_*=-\gamma_*\slashed{D} .\label{Dg}
\end{equation}
The bundle $V$ can be split into $V=V_R\oplus V_L$, so that $\gamma_*$ equals to $1$ (respectively, to $-1$) on $V_R$ (respectively, on $V_L$). According to the physics tradition, the sections $\psi\in \Gamma(V)$ will be called spinors, while the sections of $V_R$ and $V_L$ will be called right and left (chiral) spinors, respectively.

We assume that covariant derivatives $ \nabla_a$ have well defined though possibly different limits on $\Sigma$, $ \nabla_a^+$ and $ \nabla_a^-$, respectively, from different sides. However, since they both define connections on the same vector bundle (that is a restriction of $V$ to $\Sigma$) their difference,
\begin{equation}
B_a:= \nabla_a^+ - \nabla_a^- ,\label{Ba}
\end{equation}
is a vector.

A natural matching condition on $\Sigma$ is that the spinors are continuous,
\begin{equation}
\psi^+=\psi^- \label{mc1}
\end{equation}
(we continue to denote by superscripts $+/-$ the directional limits of various quantities on $\Sigma$ taken from different sides). Since we are going to use a second order operator $\slashed{D}^2$, we need a second matching condition that follows from (\ref{mc1}), that is $(\slashed{D}\psi)^+=(\slashed{D}\psi)^-$. More explicitly,
\begin{equation}
(\nabla_\np \psi)^+ + (\nabla_\nm \psi)^-=-\gamma^\nn \gamma^aB_a\psi\vert_{\Sigma} .\label{mc2}
\end{equation}

Note, that the conditions (\ref{mc1}) and (\ref{mc2}) respect chirality.  Namely, $\psi$ satisfies (\ref{mc1}) and (\ref{mc2}) iff $\gamma_*\psi$ satisfies these two matching conditions. Thus, by taking into account (\ref{Dg}) we conclude that $\slashed{D}$ splits into $\slashed{D}_R:\Gamma(V_R)\to \Gamma(V_L)$ and $\slashed{D}_L:\Gamma(V_L)\to \Gamma(V_R)$ and $\slashed{D}_L=\slashed{D}_R^\dag$. I.e., we have an elliptic complex and thus can define the Index of $\slashed{D}$ as the difference between numbers of zero eigenmodes of $\slashed{D}$ with right and left chiralities. As usual, this index may also be represented through an $L^2$ trace involving the heat operator of $\slashed{D}^2$,
\begin{equation}
\mbox{Index}\, (\slashed{D}) = \mathrm{Tr} \left( \gamma_* e^{-t\slashed{D}^2} \right), \qquad t>0 . \label{IndD}
\end{equation}

Let $e_\alpha$, $\alpha=1,\dots,n$, be a local orthonormal basis in the tangent bundle $T\MM$ and let $e^\alpha$ be a dual basis in $T^*\MM$. The corresponding Levi-Civita connection reads component-wise $w^{\alpha\beta}_{\mu}=e^\nu_\beta \bigl( \Gamma_{\mu\nu}^\rho e_\rho^\alpha -\partial_\mu e_\nu^\alpha \bigr)$. Here $\Gamma_{\mu\nu}^\rho$ is the Christoffel symbol. The flat indices $\alpha$, $\beta$, etc are lowered and raised with the unit matrix. Thus, their particular position (up or down) does not play any role.  The corresponding spin-connection is then defined locally as
\begin{equation}
w_\mu^{[s]}:=\tfrac 18 w_\mu^{\alpha\beta}[\gamma_\alpha,\gamma_\beta],\qquad
\nabla_\mu^{[s]}=\partial_\mu +\omega_\mu^{[s]}.\label{spincon}
\end{equation}
The field
\begin{equation}
A:=\nabla - \nabla^{[s]} \label{defA}
\end{equation}
will be called the Yang-Mills connection. $A$ is a diffeomorphism vector. It satisfies $[A,\gamma^\mu]=0$. 

Near $\Sigma$ our choice of the frame is consistent with the Gaussian coordinate system: $e_n^\nn=1$, $e_A^\nn=e_n^a=0$ for $A=1,\dots,n-1$. Since $\Gamma_{a\nn}^b=-K_a^b$ on $\Sigma$, with $K_a^b$ being the extrinsic curvature, $w_a^{An}=-K^A_a=-K_a^b e_b^A$. Using the Gauss coordinate system we can extend these relations to some neighbourhood of $\Sigma$.

Since all irreducible representations of the Clifford algebra in an even-dimensional space of Euclidean signature are unitary equivalent, by a suitable choice of the basis one can transform the $\gamma$-matrices to the form: 
\begin{equation}
\gamma^a=\left( \begin{array}{cc} 0 & 1 \\ 1 & 0 \end{array} \right) \otimes \hat\gamma^a,\qquad
\gamma^\nn=\left( \begin{array}{cc} 0 & \ii \\ -\ii & 0 \end{array} \right) \otimes \mathrm{id},\qquad
\gamma_*=\left( \begin{array}{cc} 1 & 0 \\ 0 & -1 \end{array} \right) \otimes \mathrm{id}.\label{newgammas}
\end{equation}
The matrices $\hat \gamma^a$ have to satisfy
\begin{equation*}
\epsilon_{a\dots c\nn}\hat\gamma^a \dots \hat\gamma^c =-(n-1)! (-\ii)^{m-1}.
\end{equation*}
This condition fixes one of the two inequivalent representations of the Clifford algebra in odd dimensions.

The Yang-Mills field and $B_a$ have to be diagonal in this basis
\begin{equation*}
A_\mu =\left( \begin{array}{cc} \hat A_\mu & 0 \\ 0 & \hat A_\mu 
\end{array} \right), \qquad B_a=\left( \begin{array}{cc} \hat B_a & 0 \\ 0 & \hat B_a 
\end{array} \right)
\end{equation*}
with $[\hat A_\mu, \hat \gamma_a]=0$. By using the symmetry of extrinsic curvature one brings the Dirac operator to the following form
\begin{equation}
\slashed{D}=\left( \begin{array}{cc} 0 & -\hat\nabla_\nn + \DD +\tfrac 12 K_a^a \\
\hat\nabla_\nn +\DD-\tfrac 12 K_a^a & 0 \end{array} \right), \qquad \DD=i\hat\gamma^a \hat\nabla_a \,,\label{DDgen}
\end{equation} 
where $\hat\nabla$ is defined by the local formula
\begin{equation}
\hat\nabla_\mu=\partial_\mu +\hat A_\mu + \tfrac 18 w_\mu^{AB}[\hat\gamma_A,\hat\gamma_B].
\end{equation}
We see, that the operator $\DD$ is a Dirac operator on $\Sigma$ depending on the spin-connection defined by the frame $e^A$ and the Yang-Mills field $\hat A_a$. However, the Yang-Mills field has different limits on $\Sigma$, $A_a^+$ and $A_a^-$, such that
\begin{equation*}
\hat A_a^+ - \hat A_a^-=\hat B_a \,.
\end{equation*}
In accordance to this, the operator $\DD$ has different limits $\DD^\pm$ as well, and
\begin{equation}
\DD^+ - \DD^-= \ii\hat \gamma^a \hat B_a.\label{DpDm}
\end{equation}

It will be convenient to impose the gauge condition
\begin{equation}
\omega_n=0 \label{axial}
\end{equation}
on all components (spin and Yang-Mills) of the connection. This gauge condition is admissible locally in a vicinity of $\Sigma$.

In what follows we shall need spectral functions of the operators $\slashed{D}$, $\DD^+$, and $\DD^-$.

Let $L$ be a Laplace type operator on a $d$-dimensional manifold, and let $Q$ be a smooth endomorphism (locally - a smooth matrix valued function). Then there is a full asymptotic expansion at $\tau \to +0$
\begin{equation}
\mathrm{Tr} \left( Q e^{-\tau  L} \right) \simeq \sum_{k=0}^\infty \tau^{\frac{k-d}2} a_k(Q,L). \label{asymex}
\end{equation}
For domain walls with matching conditions of the type considered here this property was established in \cite{Bordag:1999ed,Moss:2000gv,Gilkey:2001mj}. The heat kernel coefficients appear to be local, i.e. they are given by integrals over $\MM$ and $\Sigma$ of polynomials constructed from local invariants associated with the problem. For the bulk integral over $\MM$ these are the Riemannian and Yang-Mills curvatures and their covariant derivatives. On $\Sigma$, one also allows the extrinsic curvature and the vector $B$.

Since the left hand side of (\ref{IndD}) does not depend on $t$,
\begin{equation}
\mbox{Index}\, (\slashed{D})=a_n(\gamma_\star,\slashed{D}^2).\label{IndD2}
\end{equation}
Let $f$ be a smooth function with a compact support that does not intersect $\Sigma$. The Pontryagin density $P(x)$ for $\slashed{D}$ is defined through the equation
\begin{equation}
\int_{\MM} d^n x\, f(x)P(x)=a_n(f\gamma_*,\slashed{D}^2). \label{Pont}
\end{equation}

The $\eta$ function of $\DD$ is defined as a sum over the eigenvalues $\lambda$,
\begin{equation}
\eta (z ,\DD)=\sum_{\lambda>0} \lambda^{-z } - \sum_{\lambda<0} (-\lambda)^{-z } .\label{etas}
\end{equation}

Let us define a smooth family of operators $\DD(s)$ such that $\DD(0)=\DD^-$ and $\DD(l)=\DD^+$. We assume that $\DD(s)=\ii \hat \gamma^a\hat \nabla_a(s)$ and for all values of $s\in [0,l]$ the connection remains compatible with the Clifford structure on $\Sigma$.   For the future use, let us also assume that at the endpoints of the interval $[0,l]$ all derivatives of $\DD(s)$ with respect to $\tau$ vanish.

The value of $\eta(z,\DD)$ at $z=0$ measures the spectral asymmetry of $\DD$. $\eta(0,\DD(s))$ is discontinuous when an eigenvalue passes through $0$. If no eigenvalue of $\DD(s)$ changes its sign when $s$ varies between $0$ and $l$, we have
\begin{equation}
\partial_s \eta(0,\DD(s))= -\frac 2{\sqrt{\pi}} a_{n-2}(\partial_s \DD(s),\DD(s)^2), \label{dertau}
\end{equation}
see \cite{Gilkey:1984}.
We use the right hand side of this formula to define the \emph{relative spectral asymmetry} as
\begin{equation}
\tilde\eta(\DD^+,\DD^-)= -\frac 2{\sqrt{\pi}} \int_0^l ds\, a_{n-2}(\partial_s\DD(s),\DD(s)^2).\label{tildeta}
\end{equation}
This definition will be used independently on whether the eigenvalues cross $0$ or not. We stress, that in general the right hand side of (\ref{tildeta}) is \emph{not} equal to $\eta(0,\DD^+) - \eta(0,\DD^-)$. In contrast to $\eta(0,\DD)$, the relative spectral asymmetry is always local.

We are ready to formulate the main result of this work

\begin{theorem} \label{mainT}
Suppose that at least one of the following assumptions holds true:\\ 
(a) the extrinsic curvature $K_a^b$ of $\Sigma$ vanishes;\\ 
(b) $n\leq 4$; \\ 
(c) $n=6$ and $\mathrm{tr}(F^+_{ab}-F^-_{ab})=0$.\\ \noindent
Then
\begin{equation}
\mathrm{Index}(\slashed{D})=\int_{\MM\backslash\Sigma} d^nx\, P(x) - \tfrac 12 \tilde\eta(\DD^+,\DD^-). \label{APST}
\end{equation}
\end{theorem}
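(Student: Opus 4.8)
The plan is to interpolate between the domain-wall problem and a collection of standard APS boundary problems, using the heat-kernel representation of the index and the known structure of the local invariants. First I would cut $\MM$ along $\Sigma$ to obtain a manifold $\MM_0$ with two boundary components $\Sigma^+$ and $\Sigma^-$, each carrying the operator $\DD^+$, resp. $\DD^-$. Under the product-structure assumption of section \ref{sec:prod} the matching conditions (\ref{mc1})--(\ref{mc2}) can be rephrased, after the $2\times 2$ decomposition (\ref{newgammas})--(\ref{DDgen}), as a transmission condition relating the boundary data of $\psi$ on the two faces; I would then deform this transmission condition through a one-parameter family that interpolates between it and the direct sum of the local (bag-type) or nonlocal APS boundary conditions on $\Sigma^\pm$ associated with $\DD^+$ and $\DD^-$. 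Along this deformation the index can jump only by integers, and those jumps are exactly accounted for by eigenvalues of $\DD(s)$ crossing zero; the combination on the right-hand side of (\ref{APST}), with $\tilde\eta$ defined via (\ref{tildeta}) rather than the genuine $\eta(0)$, is built precisely so as to be continuous across such crossings. This is the conceptual replacement for the explicit 4D heat-kernel computation of \cite{Vassilevich:2018aqu}.

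Next I would evaluate the two ends of the deformation. At the APS end the classical Atiyah--Patodi--Singer theorem on $\MM_0$ gives $\mathrm{Index} = \int_{\MM_0} P - \tfrac12(\eta(0,\DD^+)+h^+) - \tfrac12(\eta(0,\DD^-)+h^-)$ up to the usual $\eta + h$ combinations, where the relative orientation of the two normals $\np=-\nm$ flips the sign of one boundary term, producing the relative quantity $\eta(0,\DD^+)-\eta(0,\DD^-)$. Since $\int_{\MM_0} P = \int_{\MM\backslash\Sigma} P$, the bulk term is already in the desired form, and it remains to convert $\eta(0,\DD^+)-\eta(0,\DD^-)$ together with all the jump contributions into $\tilde\eta(\DD^+,\DD^-)$. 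By (\ref{dertau}) the difference of $\eta(0)$'s equals $\tilde\eta$ plus an integer counting net sign changes along the family $\DD(s)$, and that same integer is what the deformation of boundary conditions contributed; the two cancel. This is the step where the hypothesis that all $s$-derivatives of $\DD(s)$ vanish at the endpoints is used, so that the family glues smoothly onto the fixed data at $s=0,l$ and the variational formula (\ref{dertau}) integrates cleanly to (\ref{tildeta}).

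The main obstacle, and the reason for the case distinction (a)--(c), is the presence of the extrinsic curvature $K_a^b$: when $\Sigma$ is not totally geodesic the geometry near the wall is not a product, so neither the reduction (\ref{DDgen}) in its simple form nor the classical APS theorem applies directly, and one must control the extra local boundary invariants built from $K$ and $B$ that appear in $a_n$. The strategy in section \ref{sec:lift} is to deform the metric and the connection in a collar of $\Sigma$ to the product case while keeping track of the variation of both sides of (\ref{APST}): the left-hand side is a homotopy invariant (being an integer), and the variation of the right-hand side is governed by $a_n(\dots)$ and $a_{n-2}(\dots)$, which are integrals of local invariants. One must then check that every potentially obstructing local invariant involving $K$ either vanishes by a symmetry/parity argument or is of too high dimension to contribute in the stated range; concretely, a counting of the available invariants shows the only dangerous term in $n=6$ is proportional to $\mathrm{tr}(F^+_{ab}-F^-_{ab})$, whence hypothesis (c), while for $n\le 4$ there is no room for such a term and under (a) it is absent identically. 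Verifying this vanishing — i.e. that the difference of the two sides of (\ref{APST}) has zero variation under the collar deformation in each of the three cases — is the technical heart of the argument.
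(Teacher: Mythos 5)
Your route (cut along $\Sigma$, deform the transmission condition to decoupled APS boundary problems, invoke the classical APS theorem, and absorb index jumps into the difference between $\eta(0,\DD^\pm)$ and $\tilde\eta$) is genuinely different from the paper's, but as written it has a real gap at its technical heart. The deformation from the matching conditions (\ref{mc1})--(\ref{mc2}) to spectral boundary conditions on the two faces is never constructed; and the index of a \emph{continuous} family of elliptic Fredholm boundary problems does not jump at all, so any jumps must occur at parameter values where ellipticity or the Fredholm property fails, which you neither identify nor relate to the family $\DD(s)$. That family is built by interpolating connections (as in (\ref{extDD})) and enters only the definition (\ref{tildeta}); a priori it has nothing to do with your family of boundary conditions, so the claim that the index jumps are ``exactly accounted for by eigenvalues of $\DD(s)$ crossing zero'' is an unproven theorem, not a remark --- and it is precisely where the content lies, since the paper stresses that $\tilde\eta(\DD^+,\DD^-)$ is in general \emph{not} $\eta(0,\DD^+)-\eta(0,\DD^-)$. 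In addition, the APS formula brings in $h^\pm=\dim\ker\DD^\pm$, which are absent from (\ref{APST}); you would need these to cancel against part of the integer bookkeeping and never show this. The paper sidesteps all of these issues: it pastes a cylinder $\Sigma\times[0,l]$ with interpolated connection to obtain a smooth closed manifold, notes (by smoothness of the heat-kernel coefficients in the smoothing parameter plus integrality of the index, Lemma \ref{l32}) that the index is unchanged, applies the closed-manifold index formula, and proves by a DeWitt-expansion scaling argument in the limit $l\to\infty$ (Lemma \ref{l31}) that the cylinder's Pontryagin integral equals $-\tfrac12\tilde\eta(\DD^+,\DD^-)$.

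For the non-product case your outline matches the paper's strategy (deform the collar to a product, use that the index and $\DD^\pm$ are unchanged), but you leave unexecuted exactly the step that produces the hypotheses (a)--(c): the change of the bulk integral must be computed, and the paper does this via the transgression identity, collapsing it to the surface term (\ref{TA}), $\int_\Sigma T\widehat{\mathrm{A}}(\Gamma^\delta,\Gamma)\wedge[\mathrm{ch}(\mathcal{F}^+)-\mathrm{ch}(\mathcal{F}^-)]$, whose vanishing under (a), (b), (c) is then a short degree-counting check and which supplies the correction term in the general case. A bare ``counting of available local invariants'' does not by itself yield the specific $n=6$ condition on $\mathrm{tr}(F^+_{ab}-F^-_{ab})$, nor the case (a) statement that $K_{ab}=0$ forces $(\Gamma_a-\Gamma^1_a)\,\mathrm{d}x^a=0$ on $\Sigma$; you need the transgression computation (or an equivalent explicit evaluation) to close the argument.
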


\begin{remark}
For generic situations when none of the assumptions (a) - (c) holds, the right hand side of (\ref{APST}) has to contain a correction term given in (\ref{TA}).
\end{remark}

We shall prove Theorem \ref{mainT} in a two-step procedure. In the next section we deal with the case of a product structure near $\Sigma$. This assumption is lifted in section \ref{sec:lift}.
\section{The case of product structure near $\Sigma$}\label{sec:prod}
Here we consider the case when $\MM$ has a product structure $\Sigma \times [-\varepsilon_0,\varepsilon_0]$ near the interface surface as depicted at Fig.\ \ref{fig1}, $s:=x^n$. We assume that the operator $\slashed{D}$ also has this structure, so that on each side of $\Sigma$
\begin{equation}
\slashed{D}=\left( \begin{array}{cc} 0 & -\partial_s + \DD \\
\partial_s +\DD & 0 \end{array} \right), \qquad \DD=i\hat\gamma^a \hat\nabla_a \label{DD}
\end{equation} 
with $\DD=\DD^+$ for $s\in (0,\varepsilon_0]$ and $\DD =\DD^-$ for $s\in [-\varepsilon_0,0)$.
\begin{figure}[h]
\centerline{\includegraphics[width=0.8\linewidth]{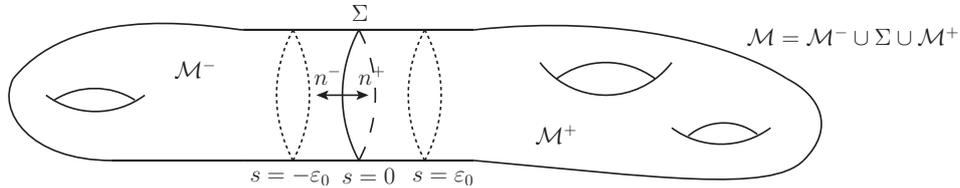}}
\caption{Manifold $\MM$ having a product structure near $\Sigma$.}
\label{fig1}
\end{figure}

To prove Theorem \ref{mainT} for this case we use the following method. We cut $\MM$ along $\Sigma$ and paste a cylinder $\mathcal{C}=\Sigma \times [0,l]$ as depicted on Fig.\ \ref{fig2}. The manifold obtained in this way will be denoted as $\widetilde{\MM}$. On the cylinder, we define 
\begin{equation}
\DD(s)\equiv \DD^- + f(s/l) \ii \hat \gamma^a \hat B_a, \qquad \mbox{for}\ s\in [0,l],
\label{extDD}
\end{equation}
where $f$ is a smooth smearing function s.t. $f(0)=0$, $f(1)=1$, and $f^{(m)}(0)=f^{(m)}(1)=0,\ \forall m>0$.

\begin{figure}[h]
\centerline{\includegraphics[width=0.7\linewidth]{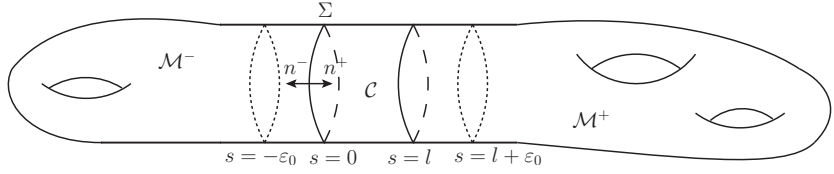}}
\caption{The extended manifold $\widetilde{\MM}$.}
\label{fig2}
\end{figure}

For the future use, we derive
\begin{equation}
\label{w2}
\slashed{D}^2=
\begin{pmatrix}
-\partial_s^2+\DD^2(s)-\dot{\DD}(s)&0\\
0&-\partial_s^2+\DD^2(s) +\dot{\DD}(s)
\end{pmatrix}
,
\end{equation}
where the dot denotes derivative with respect to $s$.

Thus, we obtained a smooth eigenvalue problem on $\widetilde{\MM}$. The corresponding Index reads
\begin{equation}
\mbox{Index}\,  (\slashed{D})_{\widetilde{\MM}}=\int_{\widetilde{\MM}} d^nx\, P(x)=
\int_{{\MM}} d^nx\, P(x) +\int_{\mathcal{C}} d^nx\, P(x). \label{Intild}
\end{equation}
In the last integral above $P(x)$ is the Pontryagin density for the new spectral problem on $\widetilde{\MM}$ rather then for the initial problem. We do not introduce a new notation as this is not likely to create a confusion.

First, we demonstrate 
\begin{lemma}\label{l31}
Under the conditions established above
\begin{equation}
\int_{\mathcal{C}}d^nx\,P(x)=-\tfrac{1}{2}\tilde{\eta}(\DD^+,\DD^-).
\end{equation}
\end{lemma}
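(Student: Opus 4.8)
The plan is to extract $\int_{\mathcal C}d^nx\,P(x)$ from the small-$t$ heat asymptotics, exploit the block structure (\ref{DD})--(\ref{w2}) of $\slashed{D}$ on the cylinder, and identify the outcome with the relative spectral asymmetry through the variation formula (\ref{dertau}). By (\ref{IndD2}) and (\ref{Pont}), $\int_{\mathcal C}d^nx\,P(x)$ is the $t$-independent term in the small-$t$ expansion of $\int_{\mathcal C}d^nx\,\mathrm{tr}\bigl(\gamma_*K(t;x,x)\bigr)$, where $K$ is the heat kernel of $\slashed{D}^2$. Since $\widetilde{\MM}$ is closed and near $s=0,l$ the configuration is a flat product with $\dot{\DD}=0$, all coincidence-limit densities involved are smooth and their heat coefficients are purely local on $\mathcal C$, so no boundary (gluing) contributions arise. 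With $\gamma_*=\mathrm{diag}(1,-1)\otimes\mathrm{id}$ and $\slashed{D}^2=\mathrm{diag}(L_1,L_2)$, $L_{1,2}=-\partial_s^2+\DD^2(s)\mp\dot{\DD}(s)$ as in (\ref{w2}), this integrand equals $\mathrm{tr}\,K_{L_1}(t;x,x)-\mathrm{tr}\,K_{L_2}(t;x,x)$.

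Next I would expand $K_{L_1}-K_{L_2}$ by Duhamel's formula about $L_0:=-\partial_s^2+\DD^2(s)$, treating $\mp\dot{\DD}(s)$ as the perturbation. Even powers of $\dot{\DD}$ cancel in the difference; the term linear in $\dot{\DD}$, after taking the fibre trace, integrating over $\mathcal C$ (a local quantity, as $\dot{\DD}$ is supported away from $s=0,l$) and using cyclicity of the trace, equals $2t\,\mathrm{Tr}_{\mathcal C}\bigl(\dot{\DD}\,e^{-tL_0}\bigr)$, whose $t$-independent part is $2\,a_{n-2}(\dot{\DD},L_0)$ with the convention $\mathrm{Tr}_{\mathcal C}(Q\,e^{-tL_0})\simeq\sum_k t^{(k-n)/2}a_k(Q,L_0)$ of (\ref{asymex}). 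It then remains to show that (i) the Duhamel terms with an odd number $N\ge 3$ of insertions of $\dot{\DD}=\ii\hat\gamma^a\dot{\hat A}_a$ do not contribute to the $t$-independent part on $\mathcal C$, and (ii) $a_{n-2}(\dot{\DD},L_0)=\tfrac{1}{\sqrt{4\pi}}\int_0^l ds\,a_{n-2}(\partial_s\DD(s),\DD(s)^2)$, the last coefficient now being that of the operator $\DD(s)^2$ on $\Sigma$.

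Both (i) and (ii) follow from one and the same selection rule. A local density of heat-kernel weight $n$ built from $N$ factors $\dot{\DD}$ (weight $2$, one $\hat\gamma$-letter each) has at most $n-2N$ weight left for the curvatures and derivatives of the coefficients of $L_0$; the only $\hat\gamma$-carrying ones are the terms $\tfrac12F_{ab}(s)\hat\gamma^{ab}$ (and their derivatives) in the potential of $\DD(s)^2$, each of weight $\ge 2$ and carrying two $\hat\gamma$-letters, so the total number of $\hat\gamma$-letters is at most $N+(n-2N)=n-N$. Upon reduction to an alternating product a nonvanishing fibre trace on the odd-dimensional $\Sigma$, where $\hat\gamma^{a_1}\cdots\hat\gamma^{a_{n-1}}\propto\epsilon^{a_1\cdots a_{n-1}}\mathrm{id}$, needs the word to have length $n-1$ (its parity being odd), whence $n-N\ge n-1$, i.e.\ $N=1$; this is (i). Applying the same count inside the $N=1$ term forces exactly $m-1$ factors $\tfrac12F_{ab}(s)\hat\gamma^{ab}$ and leaves no room for scalar curvature, bare derivatives, or the corrections produced by the $s$-dependence of $\DD^2(s)$ (these being $\hat\gamma$-free, of weight $\ge 2$, built from $\partial_s$-derivatives of the coefficients); hence only the $s$-frozen part of the heat kernel of $L_0$, which on the product metric and in the gauge (\ref{axial}) is $\tfrac{1}{\sqrt{4\pi t}}$ times that of $\DD(s)^2$, survives when paired with $\dot{\DD}(s)$, and this gives (ii). Combining with (\ref{dertau})--(\ref{tildeta}) then yields $\int_{\mathcal C}d^nx\,P(x)=2\,a_{n-2}(\dot{\DD},L_0)=\tfrac{1}{\sqrt{\pi}}\int_0^l ds\,a_{n-2}(\partial_s\DD(s),\DD(s)^2)=-\tfrac12\tilde\eta(\DD^+,\DD^-)$.

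The step I expect to be the real obstacle is precisely this selection-rule argument: controlling, without ever computing $P$, exactly which curvature, derivative and higher-$\dot{\DD}$ structures can feed into the top heat coefficient once paired with the single $\hat\gamma$-letter carried by $\dot{\DD}$ (tracking signs and the overall normalization in (ii) is routine but must be done carefully against the conventions of (\ref{gstar}) and (\ref{newgammas})). Because on the product cylinder the weight budget is exactly saturated, Lemma \ref{l31} needs no extra hypotheses; the assumptions (a)--(c) of Theorem \ref{mainT} and the correction term (\ref{TA}) enter only in section \ref{sec:lift}, where relaxing the product structure of $\MM$ near $\Sigma$ brings in the weight-one invariant $K_a^b$ and the count ceases to be tight.
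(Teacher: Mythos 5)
Your argument is correct and reaches the paper's normalization, but it gets there by a genuinely different route. The paper never needs your Clifford-trace selection rule: it observes from (\ref{Intild}) that $\int_{\mathcal C}d^nx\,P(x)$ cannot depend on the cylinder length $l$, rescales $s\to s/l$ and takes $l\to\infty$ inside the DeWitt recursion (\ref{1}), so that $\partial_s^2$ (scaling as $l^{-2}$) and all terms beyond linear order in $\dot\DD\sim l^{-1}$ drop out automatically; the recursion then collapses to that of the $\Sigma$-operator $\mathcal L(s,\xi)$ of (\ref{f30}), and the linear term is extracted by a $\xi$-derivative of $\mathrm{Tr}(\gamma_*e^{-\tau\mathcal L(s,\xi)})$ — which is the same manipulation as your first-order Duhamel term $2t\,\mathrm{Tr}(\dot\DD\,e^{-tL_0})$. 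You instead work at fixed $l$ and discard the unwanted structures (odd Duhamel orders $N\geq 3$, derivative and $s$-dependence corrections, the reduction of $a_{n-2}(\dot\DD,L_0)$ to $(4\pi)^{-1/2}\int_0^l ds\,a_{n-2}(\dot\DD(s),\DD(s)^2)$) by weight-versus-$\hat\gamma$-letter counting in the universal local invariants, using that a fibre trace of an odd word on the odd-dimensional $\Sigma$ needs at least $n-1$ letters. The paper's scaling argument is softer (no gamma combinatorics, but it leans on the a priori $l$-independence of the index), while yours is more local and identifies exactly which invariants saturate the count; both land on $\tfrac{1}{\sqrt{\pi}}\int_0^l ds\, a_{n-2}(\dot\DD(s),\DD(s)^2)=-\tfrac12\tilde\eta(\DD^+,\DD^-)$ via (\ref{tildeta}). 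One small inaccuracy to fix in your write-up: the corrections coming from the $s$-dependence of $\DD(s)^2$ are not all $\hat\gamma$-free (e.g.\ $\partial_s\hat F_{ab}\hat\gamma^{ab}$ carries two letters), but they carry extra weight without extra letters, so your saturation count excludes them anyway; similarly the mixed curvature component $\Omega_{sn}\sim\dot{\hat A}_a$ is letter-free and is excluded by the same budget.
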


\begin{proof}
Let us consider a fundamental solution $K(x,y;\tau)$ of the heat equation (the heat kernel) which satisfies 
\begin{equation}
\label{w1}
\left( \frac{\partial}{\partial\tau}+\slashed{D}^2\right)K({x},{y};\tau)=0,\qquad
K(x,y;0)=g^{-\frac{1}{2}}({x})\delta({x}-{y}).
\end{equation}
To construct an asymptotic expansion of the heat kernel we shall use the iterative procedure proposed by DeWitt \cite{DeWitt1965} (with firmer mathematical grounds to be found in \cite{BGV2004} and further development of the method -- in \cite{BV}). If both arguments $x$ and $y$ belong to $\mathcal{C}$, the expansion reads
\begin{equation}
\label{t1}
K({x},{y};\tau)=\tau^{-\frac{n}{2}}\Delta^{1/2}(\hat x,\hat y)
e^{-\frac{\sigma(\hat x,\hat y)}{2\tau}-\frac{(s-t)^2}{4\tau}}\sum\limits_{k=0}^{\infty}\tau^ka_{2k}(\slashed{D}^2;{x},{y}),
\end{equation}
where we used the notations $x=(\hat x,s)$, $y=(\hat y,t)$ with $s,t\in [0,l]$ and $\hat x, \hat y\in \Sigma$. $\sigma(\hat x,\hat y)$ is the Synge's world function on $\Sigma$, i.e. half the square of geodesic distance between $\hat x$ and $\hat y$, and $\sigma^a(\hat x,\hat y)\equiv \partial_{\hat x^a} \sigma (\hat x,\hat y)$.  $\Delta^{1/2}$ is the Van-Vleck-Morette determinant 
\begin{equation*}
\Delta^{1/2}(\hat x,\hat y)=[g(\hat x)g(\hat y)]^{-1/2}\det\left[-\frac{\partial^2\sigma(\hat x,\hat y)}{\partial x^a\partial y^b}\right].
\end{equation*}
Since $\widetilde\MM$ has no boundaries or domain walls, only even-numbered heat kernel coefficients appear in (\ref{t1}). They satisfy the following iterative equation
\begin{multline}
\label{1}
\begin{pmatrix}
k+\sigma^a\hat\nabla_a(s)+(s-t)\partial_s&0\\
0&k+\sigma^a\hat\nabla_a(s)+(s-t)\partial_s
\end{pmatrix}
a_{2k}(\slashed{D}^2;{x},{y})=\\=\Delta^{-1/2}(\hat x,\hat y)
\begin{pmatrix}
\partial_s^2-\DD(s)^2+\dot{\DD}(s)&0\\
0&\partial_s^2-\DD(s)^2-\dot{\DD}(s)
\end{pmatrix}
\Delta^{1/2}(\hat x,\hat y)
a_{2k-2}(\slashed{D}^2;{x},{y})
\end{multline}
for $k>0$. The coefficient $a_{0}(\slashed{D}^2;{x},{y})$ satisfies the same relation with $0$ on the right hand side and the initial condition 
\begin{equation}
a_{0}(\slashed{D}^2;{x},{x})=(4\pi)^{-\frac n2} \mathrm{id}.\label{a0xx}
\end{equation}

Due to Eq.\ (\ref{Intild}), the intergal of $P(x)$ over $\mathcal{C}$ cannot depend on the length $l$ of the cylinder. Thus, in the $l\to \infty$ limit only the terms in $P(x)$ that behave as $1/l$ contribute to the index. Since 
\begin{equation}
P(x) = g^{1/2} \mathrm{tr}\bigl( \gamma_* a_n(\slashed{D}^2;x,x)\bigr) \label{Pan}
\end{equation}
also in the heat kernel coefficients we have to keep the $1/l$ terms only. The simplest way to analyse the behaviour of various terms in Eq.\ (\ref{1}) in this limit is to rescale the coordinate $s\to s/l$ which has to be accompanied by the change of the metric $g_{nn}=1\to l^2$ and $g^{nn}=1\to l^{-2}$. The net result is that that $\partial_s^2$ receives a factor of $l^{-2}$ while $\dot\DD$ changes $l^{-1}\dot \DD$. Other terms remain bounded. One of the consequences of this scaling behavior is that the term $\partial_s^2$  can be omitted on the right hand side of Eq.\ (\ref{1}) as long as one is interested in the terms that survive in the limit $l\to \infty$ after the integration over $\mathcal{C}$. Thus, one can go to the coincidence limit $t=s$ right in the equation (\ref{1}) to obtain
\begin{multline}
\label{lminus1}
\begin{pmatrix}
k+\sigma^a\hat\nabla_a(s)&0\\
0&k+\sigma^a\hat\nabla_a(s)
\end{pmatrix}
a_{2k}(\slashed{D}^2;({\hat x},s),(\hat{y},s))=\\=\Delta^{-1/2}(\hat x,\hat y)
\begin{pmatrix}
-\DD(s)^2+\dot{\DD}(s)&0\\
0&-\DD(s)^2-\dot{\DD}(s)
\end{pmatrix}
\Delta^{1/2}(\hat x,\hat y)
a_{2k-2}(\slashed{D}^2;({\hat x},s),(\hat{y},s)).
\end{multline}
Here one recognizes the recurrence relation for the heat kernel coefficients of the operator $\mathcal{L}(s,1)$, where
\begin{equation}
\mathcal{L}(s,\xi)=\begin{pmatrix}
\DD(s)^2-\xi \dot{\DD}(s)&0\\
0&\DD(s)^2+\xi \dot{\DD}(s)
\end{pmatrix}. \label{f30}
\end{equation}
Thus, modulo the terms that vanish faster than $l^{-1}$ in the limit $l\to \infty$,
\begin{equation}
a_{n}(\slashed{D}^2;({\hat x},s),(\hat{y},s))=(4\pi)^{-\frac 12}
a_{n}(\mathcal{L}(s,1);\hat x,\hat y).\label{a2ka2k}
\end{equation}
The operator $\mathcal{L}$ is defined over $\Sigma$. Thus, some formulas for the heat kernel expansion require (quite obvious) modification. In particular, one has to replace $n$ by $n-1$ in (\ref{t1}) and  (\ref{f30}).
Moreover, since $\dot\DD(s)\sim l^{-1}$ we have to keep only the terms that are linear in $\dot\DD(s)$. Therefore,
\begin{equation*}
\int_{\mathcal{C}}d^nx\, g^{1/2}\, \mathrm{tr}\, \bigl( \gamma_* a_n(\slashed{D}^2; x,x)\bigr)=
(4\pi)^{-\frac 12} \left. \frac{d}{d\xi}\right\vert_{\xi=0} \int_0^lds\int_\Sigma d^{n-1}\hat x\, g^{1/2}\, \mathrm{tr}\, \bigl( \gamma_* a_n(\mathcal{L}(s,\xi);\hat x,\hat x) \bigr).
\end{equation*}
The derivative with respect to $\xi$ can be computed by expanding in $\tau$ the following identity
\begin{eqnarray}
&& \left. \frac{d}{d\xi}\right\vert_{\xi=0} \mathrm{Tr}\, \bigl( \gamma_* \exp (-\tau \mathcal{L}(s,\xi))\bigr)=-\tau \mathrm{Tr}\, \left( \gamma_* \frac{d \mathcal{L}(s,\xi)}{d\xi} \exp (-\tau \mathcal{L}(s,0)) \right)\nonumber \\
&&\qquad\qquad  = -\tau \mathrm{Tr}\, \left( \begin{pmatrix} 1 & 0 \\ 0 & -1
\end{pmatrix} \begin{pmatrix} -\dot\DD(s) & 0 \\ 0 & \dot \DD(s)
\end{pmatrix} \exp (-\tau \mathcal{L}(s,0)) \right)\nonumber\\
&&\qquad\qquad  = 2\tau \mathrm{Tr}\, \left( \dot\DD(s)\, \exp (-\tau\DD(s)^2)\right),
\nonumber
\end{eqnarray}
where we used that $\gamma_*$ commutes with $\mathcal{L}(s,0)$. This yields
 \begin{equation*}
\int_{\mathcal{C}}d^nx\, g^{1/2}\, \mathrm{tr}\, \bigl( \gamma_* a_n(\slashed{D}^2; x,x)\bigr)= \frac 1{\sqrt{\pi}}\int_0^l ds\, a_{n-2} (\dot\DD(s),\DD(s)^2)
\end{equation*}
and, after taking into account (\ref{Pan}) and (\ref{tildeta}), completes the Proof.
\end{proof}

An alternative proof of this Lemma can be obtained by using the expansion from Ref.\ \cite{2}.

To complete the proof of Theorem \ref{mainT} it still remains to show that smoothing of the spectral problem by pasting a cylinder does not change the index. First, we consider a different smoothing. Let us take some $\delta_0$ such that $0<\delta_0<\varepsilon_0$ and a family functions $\chi^\delta (s)$ with $\delta\in [0,1]$. We assume that $\chi^\delta(s)$ interpolates between $0=\chi^\delta (s)\vert_{s\leq -\delta_0}$ and $1=\chi^\delta (s)\vert_{s\geq \delta}$, is smooth for $s\neq 0$, and
\begin{equation*}
\chi^\delta(-0)=1-\chi^\delta (+0)=\delta/2,
\end{equation*}
 see Fig. \ref{Fig7}. We define a connection 
\begin{equation}
A_a^\delta(x)=A_a^-(\hat x) +\chi^\delta (s) B_a(\hat{x})\label{Aa}
\end{equation}
and an operator $\slashed{D}_\delta$ whose symbol coincides with that of $\slashed{D}$ outside $\Sigma\times [-\delta_0,\delta_0]$ and has the connection (\ref{Aa}) inside $\Sigma\times [-\delta_0,\delta_0]$. For $\delta=0$ we have our initial spectral problem, while for $\delta=1$ the Dirac operator on $\widetilde{\MM}$ and $\slashed{D}_\delta$ are related through an (obvious) smooth homotopy.

\begin{figure}[h]
\centerline{\includegraphics[width=0.7\linewidth]{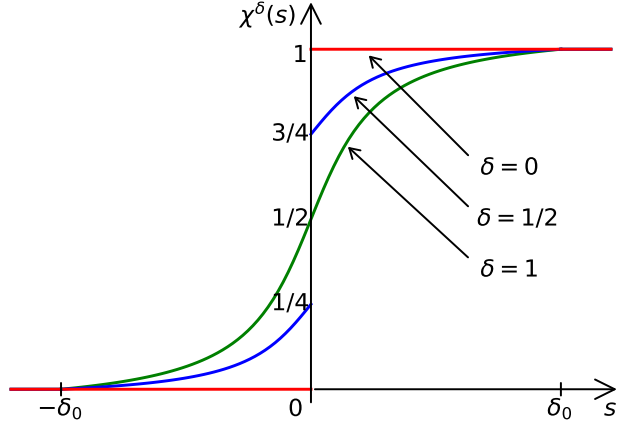}}
\caption{The function $\chi^\delta(s)$.}
\label{Fig7}
\end{figure}

Since the Dirac operator on $\widetilde{\MM}$ and $\slashed{D}_\delta$ are related through a smooth homotopy,
\begin{equation}
\mbox{Index}\,  (\slashed{D})_{\widetilde{\MM}}=\mbox{Index}\, (\slashed{D}_\delta).
\label{IIdel}
\end{equation}

\begin{lemma}\label{l32}
For the family of operators $\slashed{D}_\delta$, $\delta\in [0,1]$, one has
\begin{equation}
  \mathrm {Index}\, (\slashed{D}_\delta) =\mathrm{ Index}\,  (\slashed{D}).
\end{equation}
\end{lemma}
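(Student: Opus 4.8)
The plan is to present the family $\{\slashed{D}_\delta\}_{\delta\in[0,1]}$ as a norm-continuous family of Fredholm operators acting between a \emph{fixed} pair of spaces, and then to invoke the homotopy invariance of the Fredholm index. The observation that makes this work is that, although the second matching condition (\ref{mc2}) is genuinely $\delta$-dependent -- the connection $A_a^\delta$ jumps across $\Sigma$ by $[\chi^\delta]B_a=(1-\delta)B_a$, so (\ref{mc2}) holds for $\slashed{D}_\delta$ with $B_a$ replaced by $(1-\delta)B_a$ -- this condition belongs to the domain of the \emph{second order} operator $\slashed{D}_\delta^2$, not of the \emph{first order} operator $\slashed{D}_\delta$. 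Indeed $\slashed{D}_\delta$ is a first order elliptic operator with $L^\infty$ coefficients, and Green's formula gives a boundary term over $\Sigma$ of the form $\ii\int_\Sigma\bigl(\langle\gamma^\nn\psi^+,\phi^+\rangle-\langle\gamma^\nn\psi^-,\phi^-\rangle\bigr)$, which vanishes for all continuous $\psi,\phi$ and, when it vanishes for all continuous $\psi$, forces $\phi^+=\phi^-$; hence the continuity condition (\ref{mc1}) \emph{alone} already makes $\slashed{D}_\delta$ self-adjoint on $\mathrm{dom}:=\{\psi\in H^1(\MM\backslash\Sigma):\psi^+=\psi^-\}$, and this domain is $\delta$-independent. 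Since (\ref{mc1}) is insensitive to chirality, $\gamma_*$ preserves $\mathrm{dom}$, and by (\ref{Dg}) the operator splits into $\slashed{D}_{R,\delta}:X_R\to L^2(V_L)$ and its adjoint $\slashed{D}_{L,\delta}$, with $X_R:=\mathrm{dom}\cap\Gamma(V_R)$ again $\delta$-independent and $\mathrm{index}(\slashed{D}_{R,\delta})=\dim\ker\slashed{D}_{R,\delta}-\dim\ker\slashed{D}_{L,\delta}=\mathrm{Index}(\slashed{D}_\delta)$.

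First I would record that each $\slashed{D}_{R,\delta}$ is Fredholm as a bounded map from the Banach space $(X_R,\|\cdot\|_{H^1})$ to $L^2(V_L)$: this is the ellipticity of the transmission problem, whose resolvent is compact on the compact manifold $\MM$, so $\ker\slashed{D}_{R,\delta}$ and $\mathrm{coker}\,\slashed{D}_{R,\delta}=\ker\slashed{D}_{L,\delta}$ are finite dimensional -- the discreteness of the spectrum and the validity of the heat kernel expansion for matching conditions of this type being exactly the facts quoted after (\ref{asymex}) from \cite{Bordag:1999ed,Moss:2000gv,Gilkey:2001mj}. Next I would check norm-continuity of $\delta\mapsto\slashed{D}_{R,\delta}$. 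In the gauge (\ref{axial}) only the tangential Yang--Mills component depends on $\delta$, so
\begin{equation*}
\slashed{D}_\delta-\slashed{D}_{\delta'}=\ii\gamma^a\bigl(\chi^\delta(s)-\chi^{\delta'}(s)\bigr)B_a ,
\end{equation*}
which is multiplication by a bounded endomorphism supported in $\Sigma\times[-\delta_0,\delta_0]$, of operator norm $\le C\sup_s|\chi^\delta(s)-\chi^{\delta'}(s)|$. Choosing the interpolating family $\chi^\delta$ to depend continuously on $\delta$ in the sup norm -- which is possible throughout, including at $\delta=0$, where $\|\chi^\delta-\chi^0\|_\infty\le\delta/2$ -- gives $\|\slashed{D}_{R,\delta}-\slashed{D}_{R,\delta'}\|_{\mathcal{B}(X_R,L^2)}\to 0$. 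Since the Fredholm index is locally constant along a norm-continuous family and $[0,1]$ is connected, $\mathrm{Index}(\slashed{D}_\delta)=\mathrm{index}(\slashed{D}_{R,\delta})$ is constant, hence equals its value at $\delta=0$, namely $\mathrm{Index}(\slashed{D})$.

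The only delicate point is the one already isolated in the first paragraph: one must not be misled by the $\delta$-dependence of the matching condition (\ref{mc2}) into thinking that the domain of the operator whose index we compute is moving -- it is not, because (\ref{mc2}) lives one order up, in $\mathrm{dom}(\slashed{D}_\delta^2)$. An equivalent route, should one prefer to stay at the level of heat traces, is to use that $\mathrm{Index}(\slashed{D}_\delta)=\mathrm{Tr}\bigl(\gamma_* e^{-t\slashed{D}_\delta^2}\bigr)$ for every $\delta$ and $t>0$ by the general framework of Section \ref{sec:def}; the bounded perturbation above yields norm-resolvent continuity of $\slashed{D}_\delta$ (with $\|(\slashed{D}_\delta-\ii)^{-1}\|\le 1$), hence trace-norm continuity of $e^{-t\slashed{D}_\delta^2}=e^{-t\slashed{D}_\delta^2/2}\,e^{-t\slashed{D}_\delta^2/2}$, so $\delta\mapsto\mathrm{Tr}\bigl(\gamma_* e^{-t\slashed{D}_\delta^2}\bigr)$ is a continuous integer-valued function on $[0,1]$ and therefore constant. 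Either way, combined with (\ref{IIdel}) this closes the product-structure case of Theorem \ref{mainT}: $\mathrm{Index}(\slashed{D})=\mathrm{Index}(\slashed{D})_{\widetilde{\MM}}$.
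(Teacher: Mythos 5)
Your proposal is correct, but it takes a genuinely different route from the paper. The paper's own proof never leaves heat-kernel technology: under the homotopy (\ref{Aa}) the bulk curvatures and the matching data on $\Sigma$ depend smoothly on $\delta$, so by the locality of the coefficients for transmittal conditions \cite{Bordag:1999ed,Moss:2000gv,Gilkey:2001mj} the quantity $a_n(\gamma_*,\slashed{D}_\delta^2)$, which equals the index by (\ref{IndD2}), is a smooth integer-valued function of $\delta$ and hence constant. You instead argue at the level of the first-order operator: your key observation that the self-adjoint realization of $\slashed{D}_\delta$ has a fixed, $\delta$-independent domain determined by (\ref{mc1}) alone, with (\ref{mc2}) only entering $\mathrm{dom}(\slashed{D}_\delta^2)$, reduces the whole homotopy to the bounded multiplication perturbation $\ii\gamma^a\bigl(\chi^\delta-\chi^{\delta'}\bigr)B_a$ of small norm, after which constancy of the index is the standard homotopy invariance of a norm-continuous Fredholm family on fixed spaces. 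What this buys is independence from the (not entirely trivial) smoothness in $\delta$ of heat-kernel coefficients for domain-wall problems; what the paper's route buys is that it needs no discussion of domains, self-adjoint extensions or Fredholm theory, since it only reuses the local heat-trace asymptotics already invoked after (\ref{asymex}). Two places where your write-up could be tightened, neither a real gap: the Green's-formula discussion of self-adjointness is cleaner if you simply note that $\{\psi\in H^1(\MM\backslash\Sigma):\psi^+=\psi^-\}=H^1(\MM)$ and apply Kato--Rellich to the bounded symmetric perturbation of the smooth Dirac operator; and in your alternative heat-trace route, passing from norm-resolvent continuity to trace-norm continuity of $e^{-t\slashed{D}_\delta^2}$ requires a trace bound on $e^{-t\slashed{D}_{\delta'}^2/2}$ uniform in $\delta'$ near $\delta$, which does follow from min--max because $\|\slashed{D}_{\delta'}-\slashed{D}\|$ is uniformly bounded, but should be said.
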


\begin{proof}
Under the homotopy described in Eq.\ (\ref{Aa}) all curvatures and their derivatives in the bulk, as well as the matching conditions on $\Sigma$ together with relevant geometric invariants on $\Sigma$ are smooth functions of $\delta$. Thus, for any $Q$ and any $k\geq 0$ the coefficient $a_k(Q,\slashed{D}_\delta^2)$ is also a smooth function of $\delta$. By taking $Q=\gamma_*$ and $k=n$, and by remembering that the Index is an integer, we complete the proof of this Lemma.
\end{proof}

By combining Eq.\ (\ref{Intild}), Lemma \ref{l31}, and Eq.\ (\ref{IIdel}) with Lemma \ref{l32} one completes the proof of Theorem \ref{mainT} in the case of a product structure near $\Sigma$.

\section{Lifting the assumption of product structure near $\Sigma$}\label{sec:lift}
Consider some geometry that does not satisfy the product structure assumption from the previous section. Let us deform smoothly the metric and the connection near $\Sigma$ without changing limiting values of the connection on $\Sigma$ from both sides and of the induced metric on $\Sigma$ to transform given geometry to a product geometry in a small but finite neighbourhood of $\Sigma$. This can be done in the following way. Fix an $\varepsilon>0$ such that in the $\varepsilon$-vicinity of $\Sigma$ one can introduce the Gaussian normal coordinates and impose the axial gauge (\ref{axial}). Take a smooth function $\eta^1(s)$ satisfying $\eta^1(s)=0$ for $0\leq s \leq \varepsilon_1$ with some $\varepsilon_1<\varepsilon$ and $\eta^1(s)=s$ for $s\geq \varepsilon$, see Fig.\ \ref{figeta}. Define a smooth family of functions
\begin{equation}
\nchi^\delta (s)=s (1-\delta) + \nchi^1(s)\delta 
\end{equation}
depending on $\delta\in [0,1]$. Let us deform the fields on $\Sigma\times [-\varepsilon,\varepsilon]$ as
\begin{eqnarray}
&e_a^{A\delta}(s,\hat x)=e_a^A (\eta^\delta(s),\hat x),\qquad
&e_a^{A\delta}(-s,\hat x)=e_a^A (-\eta^\delta(s),\hat x),\nonumber \\
&A_a^\delta(s,\hat x)=A_a (\eta^\delta(s),\hat x),\qquad
&A_a^\delta(-s,\hat x)=A_a (-\eta^\delta(s),\hat x), \label{Hometa}
\end{eqnarray}
$s\in [0,\varepsilon ]$.
The Levi-Civita connection is deformed accordingly. At $\delta=1$ we have a product geometry in the cylinder $\Sigma \times [-\varepsilon_1,\varepsilon_1]$. Hence, the APS Theorem \ref{mainT} holds for $\delta=1$.
\begin{figure}[h]
\centerline{\includegraphics[width=0.7\linewidth]{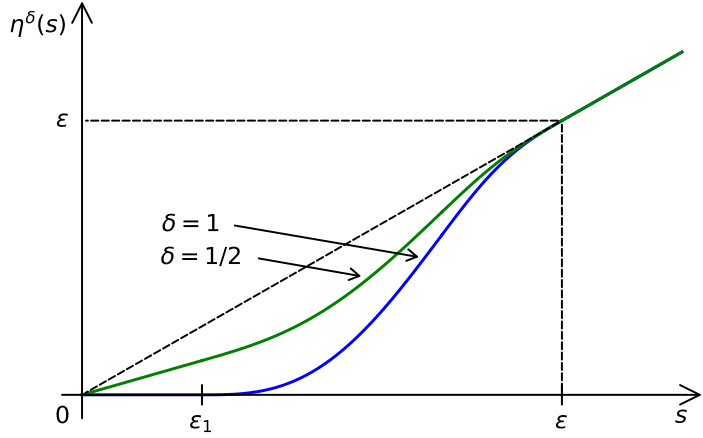}}
\caption{The family of functions $\eta^\delta(s)$.}
\label{figeta}
\end{figure}

The deformation (\ref{Hometa}) induces a smooth homotopy of the symbol of $\slashed{D}$ which preserves the boundary data. Thus, it  does not change the index and does not change the operator $\DD$. The integral of Pontryagin density over $\MM\backslash\Sigma$ can change, however. Thus the assertion of Theorem \ref{mainT} holds if and only if the integral of Pontryagin density also remains unchanged.

To proceed, we need an explicit expression for the Pontryagin density \cite{Nakahara:2003nw}
\begin{equation}
P(x) d^n x=\widehat{\mathrm{A}}(\mathcal{R})\wedge \mathrm{ch}(\mathcal{F}) \label{PAch}
\end{equation}
through the $\widehat{\mathrm{A}}$-genus and the total Chern character computed on the curvature 2-form $\mathcal{R}$ and the Yang-Mills field strength 2-form $\mathcal{F}$. Denote $\mathcal{C}_+=\Sigma \times (0,\varepsilon]$ and $\mathcal{C}_-=\Sigma \times [-\varepsilon,0)$. Then
\begin{eqnarray}
&& \int_{\MM \backslash \Sigma} d^nx\, (P^\delta -P)=\int_{\mathcal{C}_+\cup\, \mathcal{C}_-} \bigl( \widehat{\mathrm{A}}(\mathcal{R}^\delta)\wedge \mathrm{ch}(\mathcal{F}^\delta)-
\widehat{\mathrm{A}}(\mathcal{R})\wedge \mathrm{ch}(\mathcal{F})\bigr) \nonumber \\
&& \qquad =\int_{\mathcal{C}_+\cup\, \mathcal{C}_-} \bigl( [\widehat{\mathrm{A}}(\mathcal{R}^\delta)-\widehat{\mathrm{A}}(\mathcal{R})]\wedge \mathrm{ch}(\mathcal{F}^\delta) -
\widehat{\mathrm{A}}(\mathcal{R})\wedge [\mathrm{ch}(\mathcal{F}^\delta)-\mathrm{ch}(\mathcal{F})] \bigr). \label{2046}
\end{eqnarray}
The $\widehat{\mathrm{A}}$-genus and the Chern class are characteristic polynomials. Thus they are closed and their differences due to a homotopy are exact forms
\begin{equation}
\widehat{\mathrm{A}}(\mathcal{R}^\delta)-
\widehat{\mathrm{A}}(\mathcal{R})=
\mathrm{d}T\widehat{\mathrm{A}}(\Gamma^\delta,\Gamma),\quad
\mathrm{ch}(\mathcal{F}^\delta)-\mathrm{ch}(\mathcal{F})=\mathrm{d}T\mathrm{ch}(A^\delta,A),
\end{equation}
where $T$ means the transgression of corresponding invariant polynomials. Hence, the integrals in (\ref{2046}) are surface terms. Precise definition can be found in \cite{Nakahara:2003nw}. We need just one basic property of the transgression, namely that $T\mathrm{ch}(A^\delta,A)$ vanishes for coinciding connections $A$ and $A^\delta$. Since $A=A^\delta$ on all boundaries of ${\mathcal{C}_+\cup \mathcal{C}_-}$, the second term on the second line of (\ref{2046}) actually vanishes. The whole expression reads
\begin{equation}
\int_\Sigma T\widehat{\mathrm{A}}(\Gamma^\delta,\Gamma)\wedge [\mathrm{ch}(\mathcal{F}^+)-
\mathrm{ch}(\mathcal{F}^-)]\label{TA}
\end{equation}
where we took into account that $\Gamma=\Gamma^\delta$ on $\Sigma \times \{ \varepsilon \}$ and on $\Sigma \times \{ -\varepsilon \}$. 

The assertion of Theorem \ref{mainT} holds if and only if the expression (\ref{TA}) vanishes for $\delta=1$. This can be guaranteed in several cases. It is easy to check, that if $K_{ab}=0$ for $\delta=0$, then $(\Gamma_a - \Gamma_a^1)\mathrm{d}x^a=0$, and thus (\ref{TA}) vanishes. This corresponds to the case (a) of Theorem \ref{mainT}. Since the $\widehat{\mathrm{A}}$-genus contains $4k$-forms, the expression (\ref{TA}) vanish trivially in $n=2$. In $n=4$, only the 0-form $\mathrm{ch}_0$ may contribute to (\ref{TA}). However, since $\mathrm{ch}_0$ does not depend on the connection, (\ref{TA}) vanishes in $n=4$ as well. This is the assumption (b) of Theorem \ref{mainT}. Similarly, in $n=6$ just $\mathrm{ch}_1(\mathcal{F}^+)-\mathrm{ch}_1(\mathcal{F}^-)$ contributes. Due to restriction on $\Sigma$ this last expression becomes proportional to $\mathrm{tr}(F_{ab}^+ - F_{ab}^-)$ and yields the condition (c) of Theorem \ref{mainT}. Note, that this last expression vanishes if the Yang-Mills connection corresponds to a compact semisimple group. 

This completes the proof of our main Theorem \ref{mainT}.
 
\section{Discussion}\label{sec:disc}
As we have already wrote above, the APS index theorem relates chiral and parity anomalies and thus provides an important physical information. In this last section we discuss the prospects of generalizing and extending our main Theorem \ref{mainT}. First of all, the detailed necessary and sufficient conditions is the vanishing of (\ref{TA}). At the moment, we do not see other tractable general cases of vanishing (\ref{TA}) besides the ones that are listed in Theorem \ref{mainT}. However, in dealing with some particular examples in higher dimensions the expression (\ref{TA}) can possibly be instrumental.

Another possibility is to consider a more general operator of Dirac type (containing an axial vector field, e.g.) or to generalize the gluing conditions of $\Sigma$ (for instance, by allowing a brane-world type non-smooth metrics). In both cases one needs an educated guess for the pair of operators to be used instead of $\DD^+$ and $\DD^-$. Explicit computations in $n=4$ along the line of \cite{Vassilevich:2018aqu} are possible, but they do not give a sufficient insight.

\paragraph{Acknowledgements.} The work of A.V.I. was supported by the Russian Science Foundation (project 19-11-00131). Also, A.V.I. is a winner of the Young Russian Mathematician contest and would like to thank its sponsors and jury.
The work of D.V.V. was supported in parts by the S\~ao Paulo Research Foundation (FAPESP), project 2016/03319-6, by the grant 305594/2019-2 of CNPq, by the RFBR project 18-02-00149-a and by the Tomsk State University Competitiveness Improvement Program.

\end{document}